\newenvironment{proof}{{\bf Proof:  }}{\hfill\rule{2mm}{2mm}\vspace*{5pt}}
\numberwithin{figure}{section}
\numberwithin{equation}{section}
\newtheorem{corollary}{Corollary}[section]
\newtheorem{theorem}{Theorem}[section]
\newtheorem{lemma}{Lemma}[section]
\newcommand{\vecy}{\vec{y}}
\newcommand{\expect}[2]{\underset{#1}{\operatorname{\mathbf E}}\left[#2\right]}
\title{A Simple $1-1/e$ Approximation for Oblivious Bipartite Matching}
\author{Zhihao Gavin Tang\thanks{ITCS, Shanghai University of Finance and Economics. {\texttt{tang.zhihao@mail.shufe.edu.cn}}}
	\and Xiaowei Wu\thanks{Faculty of Computer Science, University of Vienna. {\texttt{wxw0711@gmail.com}}}
	\and Yuhao Zhang\thanks{Department of Computer Science, The University of Hong Kong. {\texttt{yhzhang2@cs.hku.hk}}}}
\date{\today}
\begin{document}

\begin{titlepage}
	\thispagestyle{empty}
	\maketitle
	
	\begin{abstract}
		We study the oblivious matching problem, which aims at finding a maximum matching on a graph with unknown edge set.
		Any algorithm for the problem specifies an ordering of the vertex pairs.
		The matching is then produced by probing the pairs following the ordering, and including a pair if both of them are unmatched and there exists an edge between them.
		The unweighted (Chan et al. (SICOMP 2018)) and the vertex-weighted (Chan et al. (TALG 2018)) versions of the problem are well studied.
		
		In this paper, we consider the edge-weighted oblivious matching problem on bipartite graphs, which generalizes the stochastic bipartite matching problem.
		Very recently, Gamlath et al. (SODA 2019) studied the stochastic bipartite matching problem, and proposed an $(1-1/e)$-approximate algorithm.
		We give a very simple algorithm adapted from the \textsf{Ranking} algorithm by Karp et al. (STOC 1990), and show that it achieves the same $1-1/e$ approximation ratio for the oblivious matching problem on bipartite graph.
	\end{abstract}
\end{titlepage}

\section{Introduction}

Motivated by efficient maximal matching computation and the kidney exchange applications~\cite{jet/RothSU05}, the oblivious matching problem (defined as follows) has drawn lots of attention in recent years.

\paragraph{Oblivious Matching.}
The adversary fixes a graph $G=(V,E)$, and only reveal the set of vertices to the algorithm.
That is, the algorithm has no information on the edge set.
At each step, the algorithm probes a pair of unmatched vertices $(u,v)$, and includes the pair in the matching irrevocably if there exists an edge between them, a.k.a., the query-commit model.

\medskip

Any (randomized) algorithm determines a sequence of vertex pairs, with the objective of maximizing the (expected) size of matching produced. 

It is easy to show that any algorithm that produces a maximal matching is $0.5$-approximate, and this is the best approximation ratio of any deterministic algorithm.
The first randomized algorithm beating the $0.5$ barrier was achieved by Aronson et al.\cite{rsa/AronsonDFS1995}, who showed that the Modified Randomized Greedy (MRG) algorithm is $(1/2+1/400000)$-approximate.
Better approximation ratios for the problem have also been obtained~\cite{sicomp/ChanCWZ18,talg/ChanCW18}, using the \textsf{Ranking} algorithm proposed by Karp et al.~\cite{stoc/KarpVV90} for the online bipartite matching problem.
For the problem on bipartite graphs, Mahdian and Yan~\cite{stoc/MahdianY11} showed that the \textsf{Ranking} algorithm achieves an approximation ratio $0.696$, strictly larger than $1-\frac{1}{e}$.

In this paper we consider the edge-weighted version of the problem on bipartite graphs.
In the edge-weighted setting, there is a weight $w_{uv}$ associated with each pair of vertices, which is the weight of the edge $(u,v)$, if such an edge exists.

\paragraph{Prior Works.} 
The vertex-weighted setting\footnote{In the vertex-weighted version, there is a weight $w_u$ associated with each vertex $u\in V$, and $w_{uv} = w_u + w_v$.} of the problem is studied by Chan et al.~\cite{talg/ChanCW18}. They observed that the weighted version of \textsf{Ranking} by Aggarwal et al.~\cite{soda/AggarwalGKM11} achieves an $1-\frac{1}{e}$ approximation ratio for the problem on bipartite graphs.
They also proved that the same algorithm achieves an approximation ratio strictly larger than $0.5$ on general graphs.
Very recently, Tang et al.~\cite{corr/TangWZ19} proposed an algorithm for the edge-weighted oblivious matching problem on general graphs that is $0.501$-approximate, which is the first to beat the $0.5$ approximation ratio by Greedy.

The stochastic matching problem~\cite{soda/GamlathKS19,soda/Singla18,sigecom/AssadiKL17,icalp/CostelloTT12} can be regarded as an ``easier'' version of the oblivious matching problem.
In the stochastic setting, in addition to the weight $w_{uv}$ associated with each pair $(u,v)$, there is a probability $p_{uv}$.
When a pair $(u,v)$ is probed, the edge exists with probability $p_{uv}$, and the existences of all edges are independent random events.
Gamlath et al.~\cite{soda/GamlathKS19} considered the problem on bipartite graphs and proposed an $(1-\frac{1}{e})$-approximation algorithm.


\subsection{Our Results}
Our main contribution is a proper generalization of the \textsf{Ranking} algorithm to edge-weighted graphs.
Our algorithm is consistent with the algorithm on unweighted graphs~\cite{stoc/KarpVV90} and on vertex-weighted graphs~\cite{soda/AggarwalGKM11}.
We show that our algorithm is $(1-1/e)$-approximate for the edge-weighted oblivious matching problem on bipartite graph.
Surprisingly, our analysis is very simple, and is a straightforward adaption of the randomized primal-dual proof of \cite{soda/DevanurJK13}.

\begin{theorem}
	\label{thm:main}
	The weighted \textsf{Ranking} algorithm is $(1-1/e)$-approximate for the edge-weighted oblivious matching problem on bipartite graph.
\end{theorem}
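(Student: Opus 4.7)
The plan is to adapt the randomized primal-dual analysis of \cite{soda/DevanurJK13} (DJK) to the edge-weighted oblivious bipartite setting. I first specify the weighted \textsf{Ranking} algorithm in analysis-friendly form: draw an i.i.d.\ uniform rank $y_u\in[0,1]$ for each $u\in U$, let $g(y)=e^{y-1}$, and probe pairs $(u,v)$ in decreasing order of the priority $w_{uv}(1-g(y_u))$, committing to the first edge encountered. Because the oblivious model commits at the first successful probe, on the true edge set the outcome coincides with the online weighted \textsf{Ranking} in which each $v$ picks the unmatched neighbor maximizing $w_{uv}(1-g(y_u))$; this equivalence lets me import the analytic tools of online matching. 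I then set up the bipartite matching LP with dual variables $\alpha_u,\beta_v\geq 0$ and the constraint $\alpha_u+\beta_v\geq w_{uv}$ for every $(u,v)\in E$, and adopt the DJK gain-sharing: whenever $(u,v)$ is matched, put
\[
\alpha_u=w_{uv}\,g(y_u),\qquad \beta_v=w_{uv}\bigl(1-g(y_u)\bigr),
\]
and leave the duals of unmatched vertices at zero, so that $\sum_u\alpha_u+\sum_v\beta_v=\mathrm{ALG}$ identically. By weak LP duality, it then suffices to prove approximate dual feasibility in expectation: $\mathbf{E}[\alpha_u+\beta_v]\geq(1-1/e)\,w_{uv}$ for every $(u,v)\in E$.

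To prove this inequality I condition on the ranks $y_{-u}$ of all offline vertices other than $u$ and couple the actual execution with the shadow execution in which $u$ is removed from $U$. Let $\theta$ denote the priority at which $v$ is matched in the shadow ($\theta=0$ if $v$ is unmatched there), and let $y^\ast\in[0,1]$ be the critical rank determined by $w_{uv}(1-g(y^\ast))=\theta$ (clipping to the interval if necessary). Two structural inequalities do the work: \emph{(a)} for every value of $y_u$, the real execution matches $v$ with priority at least $\theta$, so $\beta_v\geq w_{uv}(1-g(y^\ast))$; and \emph{(b)} for every $y_u\leq y^\ast$, the real execution matches $u$ to some $v'$ with $w_{uv'}\geq w_{uv}$, so $\alpha_u=w_{uv'}g(y_u)\geq w_{uv}g(y_u)$. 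Combining the two pointwise bounds and integrating against $y_u\sim\mathrm{Unif}[0,1]$,
\[
\mathbf{E}[\alpha_u+\beta_v]\;\geq\;w_{uv}\!\left(\int_0^{y^\ast}g(y)\,dy+\bigl(1-g(y^\ast)\bigr)\right)\;=\;(1-1/e)\,w_{uv},
\]
using the identity $\int_0^{y^\ast}g(y)\,dy=g(y^\ast)-1/e$, which is the defining feature of the choice $g(y)=e^{y-1}$.

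The main obstacle I anticipate is inequality (b). In the unweighted case it follows from the classical deletion-monotonicity of \textsf{Ranking} by a simple swap. Here priorities depend jointly on weights and ranks, so $u$ could a priori match some $v'$ with $w_{uv'}<w_{uv}$. The observation that rescues (b) is that the global sort orders $(u,v)$ strictly before $(u,v')$ whenever $w_{uv}>w_{uv'}$: if $u$ were to skip $v$ at its own processing time and still match $v'$ later, the state at the moment $(u,v)$ is probed must be ``$v$ already matched to a higher-priority partner $u''$'', and chasing $u''$ backwards along the cascade of displacements triggered by inserting $u$ eventually forces $u$ to have been matched even earlier to some $v''$ with $w_{uv''}\geq w_{uv}$, contradicting the supposition. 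Inequality (a) drops out of the same cascade analysis---it is the standard Ranking monotonicity that adding an offline vertex can only weakly increase each online vertex's match priority---and once both are in hand the integration above delivers the theorem.
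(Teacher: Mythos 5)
Your proposal is correct and follows essentially the same route as the paper: same algorithm (perturbed weight $(1-g(y_u))w_{uv}$ with $g(y)=e^{y-1}$), same DJK-style randomized primal-dual with the gain split $\alpha_u=g(y_u)w_{uv}$, $\beta_v=(1-g(y_u))w_{uv}$, same two-part lower bound with a threshold in $[0,1]$, and the same alternating-path/cascade argument as the single place where bipartiteness enters, yielding the same integral $\int_0^{X}g(y)\,dy+(1-g(X))=1-1/e$. The only cosmetic difference is how the threshold is defined: you set $y^\ast$ by matching $(u,v)$'s priority to $v$'s match priority in the shadow run with $u$ removed, whereas the paper defines $\theta$ directly as the rank at which $u$'s matched weight drops below $w_{uv}$ (via a monotonicity lemma); these thresholds need not coincide, but the final integral is threshold-independent so both work. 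One small remark on your step (b): it does not really need the full cascade analysis. If $u$ is still free when $(u,v)$ is probed, every earlier probe of $u$ failed, so the run up to that moment is identical to the shadow; since $(u,v)$'s priority exceeds $P$, $v$ is still free in the shadow at that time, hence $u$ matches $v$ (or was already matched earlier at higher priority, hence higher weight). The cascade argument is genuinely needed only for step (a), exactly as in the paper's Lemma on extra gain for bipartite graphs.
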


Obviously, any algorithm for the oblivious matching problem applies to the stochastic setting by ignoring the additional probability information.
Indeed, since we do not need this extra information, the same approximation ratio can be achieved even if the probabilities are arbitrarily correlated.


\begin{corollary}
	The weighted \textsf{Ranking} algorithm is $(1-1/e)$-approximate for the stochastic bipartite matching problem.
\end{corollary}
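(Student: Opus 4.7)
The plan is to apply the randomized primal-dual framework of \cite{soda/DevanurJK13}. First, I would write down the standard bipartite matching LP (maximize $\sum_{(u,v)} w_{uv} x_{uv}$ subject to the degree constraints $\sum_v x_{uv}\le 1$, $\sum_u x_{uv}\le 1$, $x\ge 0$) and its dual, which has a variable $\alpha_v$ per vertex with $\alpha_u+\alpha_v\ge w_{uv}$ for every edge $(u,v)\in E$. I would then define a randomized dual by splitting each matched edge's weight: whenever the algorithm matches $(u,v)$, set $\alpha_v:=w_{uv}\,g(y_v)$ and $\alpha_u:=w_{uv}\,(1-g(y_v))$, where $g(y)=e^{y-1}$ and $y_v$ is $v$'s random rank. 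By construction, $\sum_v\alpha_v$ equals the weight of the matching produced by the algorithm, so the theorem reduces to proving the expected dual-feasibility bound
\[
\mathbb{E}[\alpha_u+\alpha_v]\ge (1-1/e)\,w_{uv}\quad\text{for every }(u,v)\in E.
\]
Weak LP duality then gives $\mathrm{OPT}\le \mathbb{E}[\mathrm{ALG}]/(1-1/e)$, yielding the claimed approximation ratio.

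To prove the key inequality, I would fix all ranks $y_{v'}$ for $v'\neq v$ and compare the real run against the \emph{phantom run} in which $v$ is deleted from the instance. Let $\pi^*$ be the priority $w_{u\tilde u}(1-g(y_{\tilde u}))$ of $u$'s match $\tilde u$ in the phantom run (take $\pi^*=0$ if $u$ is unmatched), and define the threshold $c\in[0,1]$ by $w_{uv}(1-g(c))=\pi^*$. Two observations drive the analysis: (i) for every $y_v$, a monotonicity/exchange argument shows that $u$ is matched in the real run with $\alpha_u\ge w_{uv}(1-g(c))$, because inserting $v$ cannot worsen $u$'s match priority below its phantom value; (ii) for $y_v<c$, the pair $(u,v)$ has priority strictly above $\pi^*$, so either $u$ matches $v$ directly, or some $u''$ probes $(u'',v)$ before $(u,v)$, and the priority-based probing order of the weighted Ranking algorithm then forces $w_{u''v}(1-g(y_v))\ge w_{uv}(1-g(y_v))$, hence $w_{u''v}\ge w_{uv}$ and $\alpha_v\ge w_{uv}\,g(y_v)$. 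Combining both cases and integrating over $y_v\in[0,1]$,
\[
\mathbb{E}_{y_v}\!\left[\alpha_u+\alpha_v\right]\ge w_{uv}\left(\int_0^c g(y)\,dy+(1-g(c))\right)=(1-1/e)\,w_{uv},
\]
using $\int_0^c g(y)\,dy=g(c)-1/e$; averaging over the remaining ranks preserves the bound.

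The main obstacle will be the exchange argument underlying observation (i): that inserting $v$ into the phantom instance cannot reduce the priority of $u$'s match below $\pi^*$. I expect this to follow by induction on the probe schedule, propagating a single swap through the sequence of matches, much as in the vertex-weighted analysis. The edge-weighted setting demands extra care because $\alpha_v$ depends on the identity of $v$'s partner, not merely on ranks; this is exactly what the priority-based probing order of the algorithm is designed to control, ensuring that $v$'s partner in the real run always carries weight at least $w_{uv}$.
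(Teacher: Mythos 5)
The paper's own proof of this corollary is a one-line reduction: the weighted \textsf{Ranking} algorithm never looks at the probabilities $p_{uv}$, so Theorem~\ref{thm:main} applies \emph{to each realized graph}, and taking expectation over the realization of edges gives a $(1-1/e)$ guarantee against $\expect{}{W^*(G)}$, the expected max-weight matching of the realized graph. Your proposal instead re-derives the primal-dual argument from scratch, which is essentially a re-proof of Theorem~\ref{thm:main} rather than a proof of the corollary. That is not wrong, but it misses the actual content of the corollary, and as written it leaves two things unaddressed. First, you never say what the stochastic benchmark is, nor take the expectation over the random edge set; your LP and your dual assignment are for a single fixed edge set $E$, so you still need the sentence ``apply this realization-by-realization and average'' to close the loop. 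Second, your observation~(ii) claims the dichotomy ``either $u$ matches $v$ or $v$ is matched when $(u,v)$ is probed,'' silently ruling out the case that $u$ is already matched while $v$ is still free; that case is in fact impossible (if no probe involving $v$ has succeeded yet, then up to that point the run is identical to the phantom run, where $u$ is still free at priorities above $\pi^*$), but the reason needs to be stated, since it is exactly the step where bipartiteness enters.

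On the substance of your primal-dual argument: it is essentially the paper's proof of Theorem~\ref{thm:main}, up to cosmetic differences. You put the random ranks on the side of $v$ whereas the paper ranks $L\ni u$; you define the threshold $c$ via the phantom run (the DJK-style definition, $w_{uv}(1-g(c))=\pi^*$), whereas the paper defines the marginal rank $\theta$ directly from the real run as the largest rank at which $u$'s match still has weight at least $w_{uv}$. These are interchangeable. Your observation~(i) is the paper's Lemmas~\ref{lemma:basic-gain} (second part) and~\ref{lemma:extra-gain}, and the alternating-path / exchange argument you sketch for it is precisely the content of Lemma~\ref{lemma:extra-gain}. Your observation~(ii) is the first part of Lemma~\ref{lemma:basic-gain}. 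The integral computation at the end is identical. So if you want to prove the corollary this way, patch the two gaps above; but the intended proof is simply ``the algorithm ignores the probabilities, so Theorem~\ref{thm:main} applies per realization.''
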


We remark that our algorithm achieves the same approximation ratio $(1-1/e)$ as the algorithm by Gamlath et al.~\cite{soda/GamlathKS19}, while the two algorithms exploit quite different structures of the problem.
Indeed, the existence probabilities of edges are crucial to \cite{soda/GamlathKS19} in that they can estimate the probability of each edge appearing in the optimal matching in advance. Given that the $(1-1/e)$ analysis of both algorithms are tight, it remains an interesting open question to see how the two ideas can be combined and how better algorithms can be designed for the stochastic bipartite matching problem.

\section{\textsf{Ranking} on Edge-weighted Bipartite Graphs}
Let the given bipartite graph be $G=(L\cup R,E)$, where $L$ and $R$ denote the left hand side and right hand side vertex set, respectively. Next, we describe the weighted \textsf{Ranking} algorithm:

\paragraph{Weighted \textsf{Ranking} Algorithm.}
Fix non-decreasing function $g(x):=e^{x-1} $.  Each vertex $u\in L$ independently draws a rank $y_u\in[0,1]$ uniformly at random. For each pair of vertices $(u,v)$ where $u \in L$ and $v \in R$, let $(1-g(y_u)) w_{uv}$ be the perturbed weight of pair $(u,v)$. We probe all pairs of vertices in descending order of their perturbed weights. 

\medskip

Before the analysis, we introduce some notations. Let $\vec{y}$ denote the rank vector of all vertices in $L$ and $M(\vec{y})$ denote the corresponding matching produced by our algorithm with the rank vector $\vec{y}$.
Let $W^*$ be weight of a maximum weight matching of the given graph $G$.
A (randomized) algorithm is $r$-approximate, where $r\in[0,1]$, if for any given graph, the (expected) weight of matching given by the algorithm is at least $r \cdot W^*$.

As mentioned in the introduction, our analysis is built on a randomized primal-dual framework, which is first introduced by \cite{soda/DevanurJK13}.

For each edge matched by our algorithm, we set the dual variables of the two endpoints so that the summation equals the edge weight.
By doing so, the summation of dual variables (which is the dual objective) equals the weight of the matching.
Since the algorithm is randomized, the dual variables are random variables depending on the ranks of vertices.
However, as long as we can show that in expectation the dual constraints are approximately feasible, then we can give a lower bound on the approximation ratio using the approximate feasibility.

Formally, we import the following lemma from~\cite[Lemma 2.1]{corr/TangWZ19} and~\cite[Lemma 2.6]{stoc/HKTWZZ18} that extend the randomized primal-dual framework of~\cite{soda/DevanurJK13}.

\begin{lemma}\label{lemma:dual}
	If there exist non-negative random variables $\{ \alpha_u \}_{u\in V}$ depending on $\vecy$ such that
	\begin{compactitem}
		\item[(1)] for every rank vector $\vecy$ of vertices, $\sum_{u\in V} \alpha_u = \sum_{(u,v)\in M(\vecy)}w_{uv}$;
		\item[(2)] for every $(u,v)$ matched in the maximum weight matching, $\expect{\vecy}{\alpha_u + \alpha_v} \geq r\cdot w_{uv}$,
	\end{compactitem}
	then our algorithm is $r$-approximate.
\end{lemma}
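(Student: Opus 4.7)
The plan is a two-line primal-dual argument: condition (1) identifies the expected dual objective with the expected algorithmic profit, and condition (2) certifies approximate dual feasibility on the edges of a maximum weight matching, from which a lower bound on the dual objective follows by weak duality in expectation. I do not foresee any real obstacle; the lemma is essentially a formalization of LP weak duality in expectation, and the only thing requiring care is the role of non-negativity of the $\alpha_u$.

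First I would take expectation over $\vecy$ of the pointwise identity in (1) to obtain $\expect{\vecy}{\sum_{u\in V}\alpha_u} = \expect{\vecy}{\sum_{(u,v)\in M(\vecy)}w_{uv}}$. Since the right-hand side is the quantity we want to lower bound, it suffices to show $\expect{\vecy}{\sum_{u\in V}\alpha_u}\ge r\cdot W^*$.

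For this, fix a maximum weight matching $M^*$ with $W^* = \sum_{(u,v)\in M^*} w_{uv}$, sum condition (2) over the edges of $M^*$, and apply linearity of expectation:
\[
r\cdot W^* \;\le\; \sum_{(u,v)\in M^*}\expect{\vecy}{\alpha_u+\alpha_v} \;=\; \expect{\vecy}{\sum_{(u,v)\in M^*}(\alpha_u+\alpha_v)}.
\]
Because $M^*$ is a matching, each vertex of $V$ is an endpoint of at most one edge of $M^*$, so the pointwise inequality $\sum_{(u,v)\in M^*}(\alpha_u+\alpha_v)\le \sum_{u\in V}\alpha_u$ holds for every $\vecy$, where non-negativity of the $\alpha_u$ is what lets us throw away the contributions from vertices not covered by $M^*$. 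Chaining this with the displayed inequality and with the identity from (1) yields expected algorithmic weight at least $r\cdot W^*$, proving the claim.

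The only subtle point is precisely the use of $\alpha_u\ge 0$: without it the last inequality could fail, and the per-edge feasibility from (2) would not aggregate into a bound on the global dual objective. Everything else is linearity of expectation.
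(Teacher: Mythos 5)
Your proof is correct. A small contextual note: the paper does not actually prove this lemma itself, but imports it from prior work (Tang et al.\ and Huang et al.); nonetheless your argument is precisely the standard randomized primal-dual derivation that underlies those references, and you correctly isolate the one subtle point --- that non-negativity of the $\alpha_u$ is what permits dropping the uncovered vertices when passing from $\sum_{(u,v)\in M^*}(\alpha_u+\alpha_v)$ to $\sum_{u\in V}\alpha_u$.
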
 


\medskip

Next, we define the specific gain sharing method, i.e. how the dual variables $\alpha$ are chosen.

\paragraph{Gain Sharing.}
For each matched edge $(u,v)$, where $u\in L$ and $v\in R$, let $\alpha_u = g(y_u)\cdot w_{uv}$ and $\alpha_v = (1-g(y_u)) \cdot w_{uv}$. We shall refer to $\alpha_u, \alpha_v$ as the gains of vertices $u,v$.

\medskip

It is obvious to see the first condition of Lemma~\ref{lemma:dual} is satisfied by our gain sharing rule, since for any matched edge $(u,v)$, we have $\alpha_u + \alpha_v = w_{uv}$.

Next, we fix a pair of neighbors $(u,v)$ and derive a lower bound of the expect gain of $\alpha_u + \alpha_v$, which is used to satisfy the second condition of Lemma~\ref{lemma:dual}.
We start from a basic monotone property of the ranks of vertices, e.g., higher rank leads to a ``better'' matching for a fixed vertex.

\begin{lemma}[Monotonicity]\label{lemma:monotonicity}
	Consider any matching $M(\vec{y})$ and any vertex $u \in L$, if we fix the ranks of all vertices expect $u$, the weight of edge $u$ matches is non-increasing w.r.t. $y_u\in [0,1]$.
\end{lemma}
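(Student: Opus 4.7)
The plan is to use a continuous perturbation and swap argument. Fix the ranks of all vertices other than $u$, and view the matching $M(\vecy)$ as a function of $y_u \in [0,1]$. Since $g$ is strictly increasing, the perturbed weight $(1 - g(y_u)) w_{uv}$ of each $u$-incident edge is strictly decreasing in $y_u$, while perturbed weights of non-$u$-incident edges are constant; moreover, any two $u$-incident edges share the same factor $1 - g(y_u)$, so their relative order in the probing sequence is independent of $y_u$. Assuming ties are broken by a fixed deterministic rule, the probing order therefore changes only at finitely many values of $y_u$, and each change is an adjacent transposition of some $u$-incident edge $(u, v_1)$ with some non-$u$-incident edge $e$; as $y_u$ grows, $(u, v_1)$ moves later past $e$. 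It thus suffices to show that across each such swap the weight of $u$'s match does not increase.

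For a given swap, I would split on whether $e$ shares the $R$-endpoint $v_1$. If not, $e$ and $(u, v_1)$ have disjoint vertex sets, so they may be processed in either order with the same outcome, and $u$'s match is unchanged. Otherwise $e = (u', v_1)$ for some $u' \neq u$, and I condition on the matching state $X$ right before these two edges are processed, which is identical in both orderings since all prior edges appear in the same order. If $v_1$ is already matched in $X$, both edges are skipped in both orderings; if $v_1$ is free in $X$ but at least one of $u, u'$ is not free, a short case check shows that both orderings produce the same outcome for $u$.

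The only nontrivial case is that $u$, $u'$, and $v_1$ are all free in $X$. Before the swap (smaller $y_u$, $(u, v_1)$ processed first), $u$ is matched to $v_1$ with weight $w_{uv_1}$; after the swap (larger $y_u$, $e$ first), $v_1$ is matched to $u'$, the edge $(u, v_1)$ is then skipped, and $u$ remains free. If $u$ subsequently matches to some $v'$ in the after-swap run, then $(u, v')$ lies strictly after $(u, v_1)$ in the probing order; since all $u$-incident edges share the factor $1 - g(y_u)$, this forces $w_{uv'} < w_{uv_1}$. Hence $u$'s match weight in the after-swap scenario is at most $w_{uv_1}$, establishing the monotonicity across the swap.

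The hard part I anticipate is convincing oneself that the downstream state difference (after the swap $u'$ rather than $u$ is matched to $v_1$) does not break the argument, since this difference can propagate and alter decisions on later edges. The resolution is that to bound $u$'s final match weight we only need to control $u$-incident edges probed after $(u, v_1)$; these all have original weight strictly less than $w_{uv_1}$, regardless of how the rest of the matching is affected by the swap.
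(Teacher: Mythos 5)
Your proof is correct, but it takes a genuinely different route from the paper's. The paper argues in one shot: it starts from the run in which $u$ matches $v$, decreases $y_u$, observes that this only moves the $u$-incident edges earlier in the probing order (their relative order being preserved, since they all carry the same factor $1-g(y_u)$), and concludes that when $(u,v)$ is probed in the lower-rank run either $u$ is already matched --- necessarily via a $u$-incident edge of higher perturbed weight, hence of weight at least $w_{uv}$ --- or $u$ is still free and matches $v$. You instead vary $y_u$ continuously, factor the change of the probing order into adjacent transpositions that each move one $u$-incident edge past one non-$u$-incident edge, and verify by a local case analysis (conditioning on the common state before the two swapped probes) that $u$'s match weight cannot increase across any single swap. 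Your decomposition costs more bookkeeping but is more airtight: the paper's sketch asserts that if $u$ is free when $(u,v)$ is probed then ``$u$ and $v$ will match each other,'' which silently requires $v$ to still be free at that moment (this needs a short extra argument: all $u$-incident probes so far failed, so the non-$u$-incident probes preceding $(u,v)$ in the lower-rank run form a prefix of those in the original run, and $v$ was free after the longer prefix). Your swap analysis confronts exactly this state-propagation issue head on, and your closing observation --- that every $u$-incident edge probed after $(u,v_1)$ has weight at most $w_{uv_1}$ no matter how the swap perturbs the rest of the matching --- is the same weight comparison the paper invokes in its ``otherwise'' branch. Two small points to tidy: several crossings can occur at the same critical value of $y_u$, so a single critical value may require a sequence of transpositions of the stated form rather than one; and ``strictly less'' should be ``at most'' when edge weights can tie, which is all the lemma needs.
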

\begin{proof}
	Suppose $u$ is matched with $v$ when $y_u = y$.
	If we fix the ranks of all other vertices, and decrease $y_u$ to be some $y'<y$, the perturbed weight of all edges adjacent to $u$ will decrease.
	Hence these edges have a more prior order when we probe edges.
	More specifically, edge $(u,v)$ will be probed at least as early as when $y_u = y$.
	If $u$ is unmatched when $(u,v)$ is probed, then $u$ and $v$ will match each other and the matching remains the same; otherwise $u$ is matched to some $z$ such that
	\begin{equation*}
		(1-g(y_u)) \cdot w_{uz} \geq (1-g(y_u))\cdot w_{uv},
	\end{equation*}
	which implies $w_{uz}\geq w_{uv}$. Hence the lemma follows.
\end{proof}

From now on, we fix the ranks of all vertices in $L$ other than $u$. By Lemma~\ref{lemma:monotonicity}, there exists a \emph{marginal rank $\theta$}, such that the weight of edge $u$ matches is at least $w_{uv}$ if and only if $y_u\in[0,\theta)$. This implies a basic bound for the gain of $u$ when $y_u \in [0,\theta)$, and the gain of $v$ when $y_u \in [\theta,1]$.

\begin{lemma}
	\label{lemma:basic-gain}
	When $y_u \in [0,\theta)$, $\alpha_u \geq g(y_u) \cdot w_{uv}$, and when $y_u \in [\theta,1]$, $\alpha_v \geq (1-g(\theta))\cdot w_{uv}$.
\end{lemma}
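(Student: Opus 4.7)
The first inequality is immediate from the definitions. When $y_u \in [0,\theta)$, the defining property of the marginal rank (together with Lemma~\ref{lemma:monotonicity}) says $u$ is matched to some $v' \in R$ with $w_{uv'} \geq w_{uv}$, and the gain-sharing rule gives $\alpha_u = g(y_u)\,w_{uv'} \geq g(y_u)\,w_{uv}$.

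For the second inequality I would fix $y_u \in [\theta,1]$ and consider the state of the algorithm at the moment the pair $(u,v)$ is probed, i.e., at perturbed weight $(1-g(y_u))\,w_{uv}$. Any $u$-edge $(u,v_0)$ probed strictly before $(u,v)$ must satisfy $w_{uv_0} \geq w_{uv}$ (the common factor $1-g(y_u)$ cancels), but the definition of $\theta$ prevents $u$ from ever being matched to such a heavy neighbor, so $u$ is unmatched when $(u,v)$ is probed. If $v$ were also unmatched then $(u,v)$ would create a match of weight $w_{uv}$, contradicting $y_u \geq \theta$. Hence $v$ is already matched to some $u'$, and the pair $(u',v)$ was probed no later than $(u,v)$, which immediately gives the ``direct'' bound $\alpha_v \geq (1-g(y_u))\,w_{uv}$.

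Since $(1-g(y_u)) \leq (1-g(\theta))$ for $y_u \geq \theta$, the direct bound is strictly weaker than what the lemma claims, so I would tighten it via a coupling between the actual run at $y_u = y \in [\theta,1]$ and the boundary run at $y_u = \theta$. In both runs the non-$u$ pairs appear in exactly the same order, while the $u$-pairs are shifted uniformly to later positions in the $y_u = y$ run. I would sweep the perturbed-weight threshold downward and argue inductively that through the threshold $(1-g(\theta))\,w_{uv}$ the two runs maintain identical matchings on $V \setminus \{u\}$. The key step is to check that each additional $u$-probe performed in the boundary run during this sweep is a no-op: such a probe is $(u,v_0)$ with $w_{uv_0} \geq w_{uv}$, and by the definition of $\theta$ the vertex $u$ cannot match $v_0$ (otherwise $u$'s final partner would have weight $\geq w_{uv}$), which forces $v_0$ to be already matched at that moment, so the probe changes nothing. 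Consequently $v$'s partner at threshold $(1-g(\theta))\,w_{uv}$ is the same $u^*$ in both runs; since $(u^*,v)$ is probed before $(u,v)$ in the boundary run, its perturbed weight is at least $(1-g(\theta))\,w_{uv}$, and this perturbed weight is exactly $\alpha_v$.

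The main obstacle is formalizing the coupling, and in particular ruling out any indirect perturbation caused by the extra $u$-probes in the boundary run. The argument above reduces this to the clean observation that $u$ remains unmatched throughout all such probes (its eventual partner, of weight strictly less than $w_{uv}$, is probed after $(u,v)$), which together with the definition of $\theta$ forces each candidate partner $v_0$ to be already matched when $(u,v_0)$ is tried; once this no-op property is in hand, the rest of the argument is a direct comparison of the two probing orders.
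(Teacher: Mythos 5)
Your proof is correct and takes essentially the same route as the paper: the first inequality is immediate from the gain-sharing rule and the definition of $\theta$, and the second follows by coupling the run at $y_u = y \in [\theta,1]$ with the boundary run at $y_u = \theta$ and observing that every $u$-probe occurring before the threshold $(1-g(\theta))\,w_{uv}$ is a no-op, so the matching on $V\setminus\{u\}$ (and hence $v$'s match and $\alpha_v$) is unchanged --- the paper invokes exactly this via monotonicity, just more tersely. One small imprecision: a probe $(u,v_0)$ with $w_{uv_0}\geq w_{uv}$ can be a no-op either because $v_0$ is already matched \emph{or} because the edge $(u,v_0)$ does not exist in $G$; your write-up only mentions the former, though the conclusion is unaffected.
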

\begin{proof}
	The first bound is implied from the definition of $\theta$ and our gain sharing method.
	For the second one, consider when $y_u = \theta$.
	By definition, $u$ matches an edge with weight smaller than $w_{uv}$, which means that when we probe edge $(u,v)$, which has perturbed weight $(1-g(\theta))\cdot w_{uv}$, $v$ is already matched and $u$ is not matched.
	Thus, the perturbed weight of the edge $v$ matches is larger than $(1-g(\theta))\cdot w_{uv}$, which implies $\alpha_v \geq (1-g(\theta))\cdot w_{uv}$.
	
	Since $u$ is not matched when we probe edge $(u,v)$, if we further increase $y_u$, then $u$ remains unmatched when we probe edge $(u,v)$ (by Lemma~\ref{lemma:monotonicity}).
	In other words, all edges adjacent to $u$ that are probed before $(u,v)$ are unsuccessful.
	Hence increasing $y_u$ to any value in $(\theta,1]$ does not change the matching status of $v$ ,which implies $\alpha_v \geq (1-g(\theta))\cdot w_{uv}$ for all $y_u\in[\theta, 1]$.
\end{proof}

The next lemma characterizes the matching status of $v$ when $y_u < \theta$ and is the only part of the proof that crucially uses the bipartiteness of the graph. 
Indeed, the lemma is implied by~\cite[Corollary 2.3]{corr/TangWZ19} by fixing the ranks of all $v\in R$ to be $1$ in their algorithm.

For completeness we present a sketch of the proof here.

\begin{lemma}[Extra Gain for Bipartite]
	\label{lemma:extra-gain}
	When $y_u \in [0,\theta)$, $\alpha_v \geq (1-g(\theta))\cdot w_{uv}$.
\end{lemma}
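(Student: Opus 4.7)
The plan is to mirror the bipartite exchange argument underlying Corollary~2.3 of~\cite{corr/TangWZ19}, specialized to our setting in which only $L$-side vertices carry ranks. The boundary case $y_u=\theta$ is already handled inside the proof of Lemma~\ref{lemma:basic-gain}: at $y_u=\theta$, when edge $(u,v)$ is probed, $u$ is still unmatched and $v$ is matched to some $u'\in L$ with $(1-g(y_{u'}))w_{u'v}\geq(1-g(\theta))w_{uv}$. The task is to extend this lower bound on $\alpha_v$ to every $y_u\in[0,\theta)$.

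The comparison I would run is a continuous interpolation between the two runs at $y_u=\theta$ and $y_u=y<\theta$. Only perturbed weights of edges incident to $u$ change, and they strictly increase as $y_u$ decreases, so the global probing order evolves through finitely many ``crossing events'' at which some edge $(u,v'')$ overtakes a non-$u$-edge $e'$. The matching is constant between consecutive crossings, so it is enough to argue that no single crossing drops the perturbed weight of the edge touching $v$ below $(1-g(\theta))w_{uv}$.

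At a crossing, the matching can change only if $u$ now succeeds in grabbing some $v''$ that it previously missed, displacing the $L$-vertex $u^*$ that was matched to $v''$. The displaced $u^*$ then resumes probing its own edges in descending-perturbed-weight order; it may steal some $v'''$ from another $L$-vertex $u^{**}$, and so on. Bipartiteness is crucial here: each newly freed party is an $L$-vertex, so the chain alternates $L\to R\to L\to R\to\cdots$, and every $R$-vertex it touches stays matched, with its new partner having perturbed weight at least that of the previous one (since the new match is formed at an earlier probe). In particular, $v$ is never un-matched, and the perturbed weight of its partnering edge is monotonically non-decreasing as $y_u$ slides from $\theta$ down to $y$, which yields the desired bound $\alpha_v\geq(1-g(\theta))w_{uv}$.

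The main obstacle is making the ``chain only touches $L$-vertices'' step rigorous without getting lost in case analysis. In a general graph, a displacement chain could route back through an $R$-vertex and potentially leave $v$ stranded; bipartiteness forbids this precisely because only $L$-vertices carry the ranks driving the reordering, so each ``eviction'' can only free an $L$-vertex. Formalizing that the chain terminates either at a previously unmatched $L$-vertex or by harmlessly stabilising, and combining this with the greedy probing order to conclude that each local swap along the chain replaces a lighter perturbed-weight edge by a heavier one, is where the bipartite hypothesis enters and constitutes the technical heart of the proof.
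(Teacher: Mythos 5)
Your approach is, at its core, the same as the paper's: the substance of the argument is the alternating exchange path (or "displacement chain") in the symmetric difference of two runs, with bipartiteness guaranteeing that only $L$-vertices are ever freed and hence that $v$ stays matched with at least as good a perturbed weight. The two routes differ in framing. The paper compares two fixed runs directly — the run at $y_u = \theta$ and the run at $y_u = y < \theta$ — and looks at the partial matchings $M_1, M_2$ produced by both runs at the specific moment the edge $(z,v)$ is probed. It then cites a known symmetric-difference lemma from prior work (an alternating path from $u$ with non-increasing weights along $M_1$ and non-decreasing along $M_2$) and reads off that $v$ is already matched in $M_2$, necessarily to an edge probed no later than $(z,v)$, hence of perturbed weight $\geq (1-g(\theta))w_{uv}$. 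You instead propose a continuous interpolation from $\theta$ down to $y$, decomposed into finitely many crossing events and tracking how each crossing cascades. This buys you a somewhat more "local" picture of why bipartiteness matters, but it is technically heavier: you now have to argue that every cascade terminates and improves $v$'s perturbed weight at every intermediate step, rather than just once at the two endpoints.

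Two specific cautions. First, you explicitly flag that you have not formalized the chain-only-frees-$L$-vertices step — and that step is precisely what the paper handles by importing the alternating-path lemma (e.g.\ Lemma~2 of~\cite{sicomp/ChanCWZ18} or Lemma~2.3 of~\cite{corr/TangWZ19}); as written, your proposal is a sketch of the right idea rather than a complete proof. Second, you claim that the perturbed weight of $v$'s matched edge is monotonically non-decreasing as $y_u$ decreases. This is a stronger statement than the lemma requires and stronger than what the paper establishes (the paper only shows a lower bound against the single reference value $(1-g(\theta))w_{uv}$). The monotonicity is plausible given the crossing-event picture, but you should either prove it carefully or, more economically, follow the paper's route and only argue the lower bound directly from the symmetric-difference lemma, which avoids needing monotonicity in $y_u$ at all.
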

\begin{proof}
	We fix the ranks of all vertices in $L$ other than $u$.
	By definition when $y_u = \theta$, $v$ is matched and $u$ is unmatched when edge $(u,v)$ is probed. 
	Suppose $v$ is matched with $z\in L$, where $(1-g(y_z))\cdot w_{zv} \geq (1-g(\theta))\cdot w_{uv}$.
	Let $\vec{y}_1$ be the rank vector when $y_u = \theta$.
	Let $M_1$ be the partial matching right after we probe $(z,v)$ when $\vec{y} = \vec{y}_1$.
	Note that $u$ is not matched in $M_1$.
	Now consider the case when $y_u = y\in [ 0,\theta )$ and let $\vec{y}_2$ be the corresponding rank vector.
	Note that $\vec{y}_1$ and $\vec{y}_2$ differ in only $y_u$.
	Let $M_2$ be the partial matching right after we probe $(z,v)$ when $\vec{y}=\vec{y}_2$.
	
	If $u$ is unmatched in $M_2$ then $M_1$ and $M_2$ are identical because both matchings can be produced by removing $u$ from the graph and probing the edges up to $(z,v)$.
	Otherwise, suppose $u$ matches some $v_1\in R$ (it is possible that $v_1 = v$) in $M_2$.
	
	It is easy to show\footnote{For a formal proof, please refer to~\cite[Lemma 2]{sicomp/ChanCWZ18}, or~\cite[Lemma 2.3]{stoc/HKTWZZ18}, or~\cite[Lemma 2.3]{corr/TangWZ19}.} (by induction on the number of pairs probed) that the symmetric difference between $M_1$ and $M_2$ is an alternating path $(u = u_0,v_1,u_1,v_2,u_2,\ldots)$ starting from $u$ such that
	\begin{itemize}
		\item for all $i = 1,2,\ldots$, $(u_{i-1} v_{i}) \in M_2$; $(v_{i},u_i) \in M_1$.
		\item  for all $i=1,2,\ldots$, $w_{u_{i-1}v_i} \geq w_{v_i u_i}$.
	\end{itemize}
	
	If $v$ is not contained in the alternating path then $v$ is matched to the same vertex $z$ in both $M_1$ and $M_2$.
	Otherwise suppose $v=v_i$.
	Hence we have $u_i = z$ and $w_{u_{i-1} v} \geq w_{vz}$.
	
	In other words, $v$ is matched to some $u_{i-1}$ in $M_2$ such that $w_{u_{i-1} v}$ is at least $w_{vz}$.
	Hence for all $y_u = y\in [ 0,\theta )$, we have $\alpha_v \geq (1-g(\theta))\cdot w_{uv}$.
	%
	%
	%
	%
\end{proof}

Finally, we finish the proof of Theorem~\ref{thm:main}. Combining Lemma~\ref{lemma:basic-gain} and Lemma~\ref{lemma:extra-gain}, we have
\begin{align*}
	\expect{\vecy}{\alpha_u + \alpha_v} \geq w_{uv} \cdot \left(\int_0^\theta g(y_u)dy_u + (1-g(\theta))\right)
	= w_{uv} \cdot (1-\frac{1}{e}),
\end{align*}
where the equality follows from $g(y) = e^{y-1}$.
By Lemma~\ref{lemma:dual}, the approximation ratio follows.

{
		\bibliography{matching}

\begin{thebibliography}{10}

\bibitem{soda/AggarwalGKM11}
Gagan Aggarwal, Gagan Goel, Chinmay Karande, and Aranyak Mehta.
\newblock Online vertex-weighted bipartite matching and single-bid budgeted
  allocations.
\newblock In {\em SODA}, pages 1253--1264, 2011.

\bibitem{rsa/AronsonDFS1995}
Jonathan Aronson, Martin Dyer, Alan Frieze, and Stephen Suen.
\newblock Randomized greedy matching. ii.
\newblock {\em Random Struct. Algorithms}, 6(1):55--73, January 1995.

\bibitem{sigecom/AssadiKL17}
Sepehr Assadi, Sanjeev Khanna, and Yang Li.
\newblock The stochastic matching problem: Beating half with a non-adaptive
  algorithm.
\newblock In {\em {EC}}, pages 99--116. {ACM}, 2017.

\bibitem{talg/ChanCW18}
T.{-}H.~Hubert Chan, Fei Chen, and Xiaowei Wu.
\newblock Analyzing node-weighted oblivious matching problem via continuous
  {LP} with jump discontinuity.
\newblock {\em {ACM} Trans. Algorithms}, 14(2):12:1--12:25, 2018.

\bibitem{sicomp/ChanCWZ18}
T.{-}H.~Hubert Chan, Fei Chen, Xiaowei Wu, and Zhichao Zhao.
\newblock Ranking on arbitrary graphs: Rematch via continuous linear
  programming.
\newblock {\em SIAM Journal on Computing}, 47(4):1529--1546, 2018.

\bibitem{icalp/CostelloTT12}
Kevin~P. Costello, Prasad Tetali, and Pushkar Tripathi.
\newblock Stochastic matching with commitment.
\newblock In {\em {ICALP} {(1)}}, volume 7391 of {\em Lecture Notes in Computer
  Science}, pages 822--833. Springer, 2012.

\bibitem{soda/DevanurJK13}
Nikhil~R. Devanur, Kamal Jain, and Robert~D. Kleinberg.
\newblock Randomized primal-dual analysis of {RANKING} for online bipartite
  matching.
\newblock In {\em {SODA}}, pages 101--107. {SIAM}, 2013.

\bibitem{soda/GamlathKS19}
Buddhima Gamlath, Sagar Kale, and Ola Svensson.
\newblock Beating greedy for stochastic bipartite matching.
\newblock In {\em {SODA}}, pages 2841--2854. {SIAM}, 2019.

\bibitem{stoc/HKTWZZ18}
Zhiyi Huang, Ning Kang, Zhihao~Gavin Tang, Xiaowei Wu, Yuhao Zhang, and Xue
  Zhu.
\newblock How to match when all vertices arrive online.
\newblock In {\em {STOC}}, pages 17--29. {ACM}, 2018.

\bibitem{stoc/KarpVV90}
Richard~M. Karp, Umesh~V. Vazirani, and Vijay~V. Vazirani.
\newblock An optimal algorithm for on-line bipartite matching.
\newblock In {\em STOC}, pages 352--358, 1990.

\bibitem{stoc/MahdianY11}
Mohammad Mahdian and Qiqi Yan.
\newblock Online bipartite matching with random arrivals: an approach based on
  strongly factor-revealing {LP}s.
\newblock In {\em STOC}, pages 597--606, 2011.

\bibitem{jet/RothSU05}
Alvin~E. Roth, Tayfun S{\"{o}}nmez, and M.~Utku {\"{U}}nver.
\newblock Pairwise kidney exchange.
\newblock {\em J. Economic Theory}, 125(2):151--188, 2005.

\bibitem{soda/Singla18}
Sahil Singla.
\newblock The price of information in combinatorial optimization.
\newblock In {\em {SODA}}, pages 2523--2532. {SIAM}, 2018.

\bibitem{corr/TangWZ19}
Zhihao~Gavin Tang, Xiaowei Wu, and Yuhao Zhang.
\newblock Perturbed greedy on oblivious matching problems.
\newblock {\em CoRR}, abs/1907.05135 (to appear in STOC 2020), 2019.

\end{thebibliography}
		\bibliographystyle{plain}
}
\end{document}